\theoremstyle{plain}
\newtheorem{claim}[theorem]{Claim}
\title{Time-dependent shortest paths in bounded treewidth graphs\footnote{This material is based upon work supported by the National Science Foundation under Grant No.\ CCF-1252833.}}
\author[1]{Glencora Borradaile}
\author[2]{Morgan Shirley}
\affil[1]{Oregon State University\\
  Corvallis, Oregon, US\\
  \texttt{glencora@eecs.oregonstate.edu}}
\affil[2]{Oregon State University\\
  Corvallis, Oregon, US\\
  \texttt{shirlemo@oregonstate.edu}}
\authorrunning{G. Borradaile and M. Shirley} 
\subjclass{F.2.2 Nonnumerical Algorithms and Problems, G.2.2 Graph Theory}
\keywords{treewidth, time dependent shortest paths, star-mesh transformations}
\begin{document}

\maketitle

\begin{abstract}
	We present a proof that the number of breakpoints in the arrival function between two terminals in graphs of treewidth $w$ is $n^{O(\log^2 w)}$ when the edge arrival functions are piecewise linear. This is an improvement on the bound of $n^{\Theta(\log n)}$ by Foschini, Hershberger, and Suri for graphs without any bound on treewidth. We provide an algorithm for calculating this arrival function using star-mesh transformations, a generalization of the wye-delta-wye transformations.
 \end{abstract}

	\section{Introduction}
	
	We consider the problem of computing shortest paths in a graph whose edge-costs are not constant, but depend on the time at which a traveler arrives at an endpoint.  This is used to model many real-world situations in which edge-costs are not fixed.  For example, in road networks, the cost of traversing a given segment of road depends on the time of day: travel times may be longer during rush hour.  Similarly, in routing networks, certain connections may experience more delay during peak downloading times.  For a given starting point $s$, a given starting time $t$, and positive edge-cost functions, one can modify Dijkstra's algorithm to account for the variable edge-costs by storing, in addition to a vertex's priority, the time at which one can arrive at that vertex (for details, see the algorithms of Orda and Rom~\cite{orda1990shortest} and Ding, Yu, and Qin~\cite{ding2008finding}).
	
	However, if we wish to compute the cost of traveling from one vertex to another as a function of \emph{all} possible departure times from the start vertex, the problem quickly becomes much more difficult.  
	Foschini, Hershberger, and Suri showed that, even for linear edge-cost functions, the number of times that a shortest path between two vertices can change, over the possible departure times, is in the worst case $n^{\Theta(\log n)}$~\cite{Foschini14}. A function representing the minimum cost of a shortest path between two vertices as a function of departure time is similarly bounded.
	
	Foschini, Hershberger, and Suri further observe that this bound should be polynomial for {\em bounded treewidth} graphs~\cite{Foschini14} based on a theorem by Fern{\'a}ndez-Baca and Slutzki~\cite{fernandez1992parametric} (we define bounded treewidth graphs in Section~\ref{sec:prelim}). In this paper, we give a constructive proof of this observation and an efficient algorithm for calculating the list of shortest paths between two vertices in bounded treewidth graphs. More specifically, we show that, given a graph of treewidth $w$ with linear edge-cost functions, the number of different shortest paths (over all possible departure times) between two vertices is bounded above by $n^{O(\log^2 w)}$ (Theorem~\ref{thm:bounds}). Given this bound, it is possible to bound the complexity of manipulating edge-cost functions algorithmically.  We provide an efficient method for calculating the list of shortest paths between two vertices in graphs of bounded treewidth (Theorem~\ref{thm:bounds_on_algorithm}). To do so, we use an algorithm for reducing the size of a graph (Theorem~\ref{thm:reduction_algorithm}) by way of parallel reductions and star-mesh transformations (defined in Section~\ref{sec:transformations}).
	
	\subsection{Related work}
	
	We briefly note a couple of papers dealing with time-dependent shortest paths that are not mentioned above. Work related to star-mesh transformations will be mentioned in Section~\ref{sec:transformations}, after defining these transformations formally.
	
	Cooke and Halsey \cite{cooke1966shortest} first introduced the idea of time-dependent shortest paths. They were concerned with finding the shortest path between any two vertices at a given time where the edges have \emph{discrete} timesteps, instead of the continuous range of times that we allow in this paper. 
	
	Dreyfus \cite{dreyfus1969appraisal} surveys a number of shortest paths problems, including time-dependent shortest paths. He references the work of Cooke and Halsey and improves it by allowing for continuous edge cost functions. 
	
	Dean \cite{dean2004shortest} provides a survey of work completed in this field. In the paper he notes that finding shortest paths at a specific time is much easier than finding shortest paths at all times -- a fact later given a strict bound by Foschini, Hershberger, and Suri~\cite{Foschini14}.
	
	\section{Preliminaries}\label{sec:prelim}
	
	\subsection{Treewidth}
	
	The following definitions are from Robertson and Seymour~\cite{robertson1984graph}.
	
	A \emph{tree decomposition} of a graph $G = (V, E)$ is a pair $(T, \mathcal{X})$ where $T = (V_T, E_T)$ is a tree and $\mathcal{X} = (X_t : t \in V_T)$ is a family of subsets of $V$ where the following hold:
	\begin{itemize}
		\item The union of all elements of $\mathcal{X}$ is $V$.
		\item For every edge $e \in E$ there exists $t \in V_T$ where $e$ has both ends in $X_t$.
		\item If $t, t', t'' \in V_T$ are in a path of $T$ in that order, then all vertices in the intersection of $X_t$ and $X_{t''}$ are also in $X_{t'}$.
	\end{itemize}
	Elements of $\mathcal{X}$ are called \emph{bags}. The \emph{width} of a tree decomposition is the maximum cardinality of bags in $\mathcal{X}$ minus one. The \emph{treewidth} of a graph is the minimum width over all tree decompositions. For instance, the treewidth of any tree graph is 1.
	
	\subsection{Time-dependent shortest paths}
	
	Consider a graph $G = (V, E)$. For each edge $uv \in E$, a trip along $uv$ departing from $u$ at time $t$ will arrive at $v$ at time $A_{uv}(t)$, where $A_{uv}: \mathbb{R}^{+} \cup \{\infty\} \rightarrow \mathbb{R}^{+} \cup \{\infty\}$.  We call $A_{uv}$ the \emph{arrival function} of $uv$.  Likewise, $A_{vu}$ is the arrival function for travel along $uv$ departing from $v$ and arriving at $u$.  If an edge is only traversable in one direction, the arrival function in the other direction is $\infty$. We use undirected edges here instead of directed edges in order to simplify the calculation of new edge arrival functions in Section~\ref{ssec:maintaining-arrival}.
	
	For all $uv$ and $t$, we require the following two constraints to ensure that the arrival function behaves reasonably. 
	
	\begin{itemize}
		\item $A_{uv}(t) \ge t$. That is, the traversal of an edge cannot be completed before it has begun.
		\item $\frac{d}{dt}(A_{uv}(t)) \ge 0$. This means that a later departure time cannot result in an earlier arrival time. Edges under this constraint are called \emph{First-In First-Out}, or \emph{FIFO}, because of the property that two traversals of an edge will complete in the order that they were initiated. This is the case for many applications of the time-dependent shortest paths problem.
	\end{itemize} 
	
	With these constraints, the set of arrival functions forms a semiring with the two operators relevant to this paper, $\min$ and $\circ$ (functional composition). With the requirement that $\frac{d}{dt}(A_{uv}(t)) \ge 0$, $\circ$ is left distributive over $\min$.
	
	The arrival function of a path $P = \{e_1, e_2, \ldots, e_{|P|}\}$, denoted $A_P$, is the composition of the arrival functions of all edges in that path. For two vertices $s$ and $s'$ and a given time $t$, $A_{(s, s')}(t)$ is the minimum value of $A_P(t)$ over all $s$-to-$s'$ paths $P$; the corresponding arrival function, $A_{(s,s')}$, for any two vertices $s$ and $s'$ is likewise defined. The arrival function between a vertex and itself is the identity function. In applications, we often want to find the arrival function for two distinguished vertices, called \emph{terminals}. We give the special name \emph{end-to-end arrival function} to $A_{(s, d)}(t)$ for terminals $s$ and $d$.
	
	When we are discussing correlated arrival functions in multiple graphs, we clarify with a superscript which graph the arrival function we are considering is in.  for example, $A_{P}^H$ is the arrival function for a path $P$ in a graph $H$ and $A_{(s, s')}^H$ is the arrival function for vertices $s$ and $s'$ in a graph $H$.
	
	\subsection{Graph transformations} \label{sec:transformations}
	
	In our algorithm for calculating end-to-end arrival functions we use \emph{parallel reductions} and \emph{star-mesh transformations}. Graph operations such as these have a wide range of uses, such as network analysis (for example, Chari, Feo, and Provan use such operations for approximating network reliability~\cite{chari1996delta}) and determining equivalent resistances in a circuit~\cite{kennelly1899equivalence}.
	
	\subsubsection{Wye-delta-wye transformations}
	
	One set of graph transformations that has received significant research attention is the set of \emph{wye-delta-wye} reductions, which include the series-parallel reductions along with two additional reductions, the Y-$\Delta$ and $\Delta$-Y reductions.  The series-parallel reductions are so-called because any series-parallel graph can be {\em reduced} (transformed by a sequence of these reductions to a single vertex) by repeated application of these steps. Series-parallel graphs are exactly the graphs with treewidth 2 (see, for example, Brandst{\"a}dt, Le, and Spinrad~\cite{brandstadt1999graph}). The addition of the Y-$\Delta$ and $\Delta$-Y reductions expand the set of all reducible graphs to include all planar graphs~\cite{Epifanov66}. 
	\paragraph{Series-parallel reductions}
	\begin{itemize}
		\item $R_{0}$: Delete a self-loop.
		\item $R_{1}$ (Pendant Reduction): Delete a degree-one vertex and its incident edge.
		\item $R_{2}$ (Series Reduction): Given a degree-two vertex $u$ adjacent to vertices $v$ and $w$, delete $u$ and replace the edges $uv$ and $uw$ with a single edge $vw$.
		\item $R_{3}$ (Parallel Reduction): Given a cycle of length two, delete one of the edges in the cycle.
	\end{itemize}
	\paragraph{Y-$\Delta$ and $\Delta$-Y transformations} A wye, Y, is a vertex of degree 3 and a delta, $\Delta$, is a cycle of length 3\footnote{In most cases, these transformations are applied to planar graphs, in which case $\Delta$s are usually restricted to be faces.}.
	
	\begin{itemize}
		\item Y-$\Delta$: Delete a wye $u$ with adjacent vertices $v$, $w$, and $x$ and replace edges $uv$, $uw$, and $ux$ with edges $vw$, $vx$, and $wx$.
		\item $\Delta$-Y: Delete a delta consisting of edges $vw$, $vx$, and $wx$ and add a vertex $u$ and edges $uv$, $uw$, and $ux$.
	\end{itemize}
	
	Note that whereas the series-parallel reductions each reduce the number of edges in a graph by one, the Y-$\Delta$ and $\Delta$-Y transformations keep the number of edges constant.  Also note that Y-$\Delta$ and $\Delta$-Y are reverse operations of each other. 
	
	We call two graphs \emph{wye-delta-wye equivalent} if it is possible through repeated application of the Y-$\Delta$ and $\Delta$-Y transformations to create one graph given the other. Naturally, this relationship is symmetric.
	
	It is often of interest to indicate a set of terminal vertices that should remain at the end of a series of reductions:  the terminals should not be deleted as part of an $R_{1}$, $R_{2}$, or Y-$\Delta$ transformation.
	
	\subsubsection{Related work: wye-delta-wye reducibility}
	
	Epifanov~\cite{Epifanov66} was the first to prove that all planar graphs are wye-delta-wye reducible. Feo and Provan~\cite{Feo93} give a simple algorithm for reducing two-terminal planar graphs using $O(n^{2})$ transformations. Chang and Erickson~\cite{chang2015electrical} prove that there are some graphs for which $\Omega(n^{3/2})$ transformations are required. Both these papers conjecture that there exists an algorithm for wye-delta-wye reduction of planar graphs using $\Theta(n^{3/2})$ transformations. Gitler and Sagols~\cite{Gitler11} give a $O(n^{4})$ algorithm for reducing three-terminal planar graphs; Archdeacon, Colbourn, Gitler, and Provan~\cite{Archdeacon00} show that the existence of such an algorithm implies that one-terminal crossing-number-one graphs are also reducible.
	
	There is no known characterization of wye-delta-wye reducible graphs, but since it is a minor-closed family~\cite{truemper1989delta}, the Robertson-Seymour theorem guarantees the existence of a finite number of forbidden minors~\cite{robertson2004graph}, each of which can be recognized in polynomial time~\cite{robertson1995graph}. Yu~\cite{Yu06} gives a proof that there are more than 68 billion such forbidden minors, so while recognizing wye-delta-wye reducible graph is in $\textsf{P}$, an algorithm relying on detecting forbidden minors would be impractical.
	Seven known forbidden minors are the Petersen Family of graphs. These graphs include the Petersen Graph and its 6 wye-delta-wye equivalent graphs (including $K_{6}$ and $K_{3,3,1}$). Since these are the 7 forbidden minors for linklessly-embeddable graphs~\cite{robertson1995sachs}, all wye-delta-wye reducible graphs are linklessly-embeddable.
	
	Some graphs are not reducible to a single vertex but are reducible to a smaller irreducible graph. For example, it is easy to show that the Heawood graph reduces to $K_{7}$ and the M\"obius-Kantor graph reduces to $K_{2,2,2,2}$. Other graphs, however, cannot have any of the wye-delta-wye transformations applied to them because they have both minimum degree and girth 4; for example, the four-dimensional hypercube graph $Q_{4}$.
	
	\subsubsection{Star-mesh transformations}
	
	A natural generalization of the serial reduction and the Y-$\Delta$ transformation is to increase the size of the deleted vertex and of the resulting clique. This general class of transformations are called \emph{star-mesh transformations}. We call such a transformation for a deleted vertex of degree $k$ a $k$-star-mesh transformation. Note that the size of the resulting clique (the ``mesh'') is $k$ as well. This class of transformations is well-studied, especially in its application to electrical networks; see, for example, the work of Shannon~\cite{shannon1938symbolic}.
	
	One might consider an inverse transformation where the edges of a clique are deleted and a star is added adjacent to all vertices previously in the clique. This is a natural generalization of the $\Delta$-Y transformation. However, a $k$-star-mesh transformation for $k > 3$ will increase the number of edges in the graph. This means that the equivalent $k$-mesh-star transformation will \emph{reduce} the number of edges in the graph. If edges in the graph are assigned weights, and we are expecting some property of the graph to be maintained after the transformation, we will assign the new edges weights based on some set of equations; a reduction in the number of edges can result in there being fewer variables than equations, leading to an unsolvable system.
	
	\begin{figure}
		\centering
		
		\caption{A 4-star-mesh transformation.}
		\vspace*{2mm}
		\includegraphics[width=.4\linewidth]{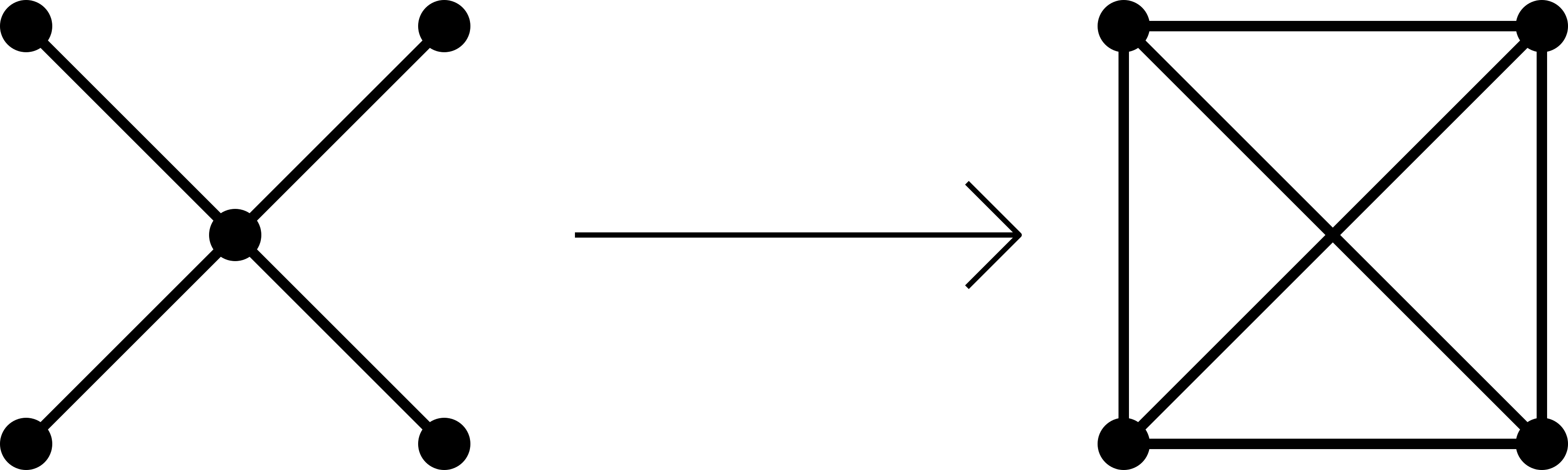}
	\end{figure}
	
	Any graph can be trivially reduced by way of star-mesh transformations of arbitrary size. One can simply choose a vertex and star-mesh transform it. The resulting graph has strictly fewer vertices, and therefore this process can be continued until only one vertex is remaining. However, this can result in a very dense graph; one goal of this paper is to maintain sparsity.
	
	\section{Polynomial bounds on arrival functions}
	We consider time-dependent shortest paths in graphs with continuous piecewise-linear edge arrival functions. In a piecewise-linear function $f: \mathbb{R} \rightarrow \mathbb{R}$, a \emph{breakpoint} is a value $t$ where $\exists \alpha: \forall \varepsilon \mbox{ with } 0 < \varepsilon < \alpha, f'(t-\varepsilon) \ne f'(t+\varepsilon)$. In simpler terms, a breakpoint is a point at which one ``piece'' of the function ends and another begins. 
	
	When manipulating such a graph we calculate new arrival functions (for paths and pairs of vertices) as minima and compositions of other arrival functions. When storing and performing computations on piecewise-linear functions, the complexity of operations depends on the number of breakpoints. The number of breakpoints that can result in an end-to-end arrival function gives a lower bound on the complexity of computing time-dependent shortest paths over all times in graphs with piecewise-linear edge arrival functions.
	
	Let $b^G$ be the maximum number of breakpoints in any end-to-end arrival function of a graph $G$ ($A_{(s,d)}^G$ for any vertices $s,d \in G$). We will use the notation $b_w^G$ when $G$ has treewidth $w$. Similarly, we let $b(n)$ be the maximum number of breakpoints in any end-to-end arrival function for \emph{any} graph with up to $n$ vertices, and $b_w(n)$ for any graph with up to $n$ vertices and treewidth at most $w$. Finally, if $s$ and $d$ are vertices in $G$, then $b^G_{(s, d)}$ is the number of breakpoints in the $s$-to-$d$ arrival function.
	
	Foschini, Hershberger, and Suri~\cite{Foschini14} prove that $b(n) = Kn^{\Theta(\log n)}$, where $K$ is the total number of linear pieces among all the edge arrival functions in the initial graph; that is $K$ is at most the number of primitive breakpoints plus the number of edges.  In this paper, we prove that $b_w(n) = Kn^{O(\log^2 w)}$. In our proof we use the following two results of Foschini, Hershberger, and Suri; we have reworded their statements to be consistent with the notation of this paper.
	
	
	\begin{lemma}[Lemma 4.2, Foschini, Hershberger, and Suri~\cite{Foschini14}] \label{lem:k_times_more}
		The number of breakpoints in an end-to-end arrival function in a graph with piecewise linear edge arrival functions is at most $K$ times the number of breakpoints in the same function if the graph had linear edge functions. That is, $b^G_{(s,d)} \le K \cdot b^{G'}_{(s,d)}$ where $G$ has $K$ linear pieces among all the edge arrival functions and $G'$ has linear edge arrival functions, for any terminal vertices $s, d$.
	\end{lemma}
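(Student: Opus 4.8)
The plan is to write the end-to-end arrival function as $A^G_{(s,d)} = \min_P A_P$, the pointwise minimum over all $s$-to-$d$ paths $P$ of the composed functions $A_P$, and to compare the breakpoint structure of this lower envelope against that of the linearized instance $G'$. The first ingredient is a composition estimate: since every edge arrival function is FIFO, hence monotone non-decreasing, the number of breakpoints of a composition $g \circ f$ is at most the number of breakpoints of $f$ plus the number of breakpoints of $g$. Indeed, each breakpoint of $f$ remains a breakpoint of $g \circ f$, while each breakpoint of $g$ pulls back through the monotone $f$ to at most one departure time (on an interval where $f$ is flat the composition is flat and no breakpoints are created). Applying this edge by edge along a path $P$ shows that $A_P$ has at most the sum over the edges $e \in P$ of the number of breakpoints of $A_e$, which is at most $K$; in particular, when the edges are linear, as in $G'$, every $A_P$ is a single line.

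Next I would read off what $b^{G'}_{(s,d)}$ actually counts. In $G'$ each $A_P$ is linear, so $A^{G'}_{(s,d)}$ is a lower envelope of lines whose breakpoints are exactly the departure times at which the minimizing path changes; these switch times cut $[0,\infty)$ into at most $b^{G'}_{(s,d)}+1$ intervals, on each of which a single path is optimal. The aim is then to show that reintroducing the piecewise-linear structure multiplies the breakpoint count on each such interval by at most $K$. The mechanism is the composition estimate above: on a departure interval where a single path $P^*$ governs the envelope, $A^G_{(s,d)}$ agrees with $A_{P^*}$, and as $t$ sweeps the interval, monotonicity of the composition guarantees that each of the at most $K$ linear pieces along $P^*$ is entered at most once, contributing at most $K$ new \emph{bend} breakpoints. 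Summing over the at most $b^{G'}_{(s,d)}+1$ intervals and adding back the switch breakpoints would give a bound of the shape $K \cdot b^{G'}_{(s,d)}$.

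The main obstacle is precisely the assumption hidden in the previous paragraph: changing edges from linear to piecewise linear can change which path attains the minimum, so a priori $G$ could switch optimal paths more often than $G'$, and the two interval decompositions need not align. Controlling this is the heart of the argument. Rather than transferring the switch points of $G'$ literally, I would classify each breakpoint of $A^G_{(s,d)}$ as either a \emph{bend} (the optimal path is unchanged across it while one of its edges crosses an internal breakpoint) or a \emph{switch} (the optimal path changes), bound the bends inside each maximal optimal-path episode by $K$ using the composition estimate, and then bound the number of switches against $b^{G'}_{(s,d)}$ by a monotone-coupling argument that exploits the FIFO property together with the left-distributivity of $\circ$ over $\min$. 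Establishing that the piecewise instance cannot manufacture switches beyond those already forced in the linearized lower envelope is the step I expect to require the most care.
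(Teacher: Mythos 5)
This statement is not proved in the paper at all: it is quoted verbatim (with notation adapted) as Lemma~4.2 of Foschini, Hershberger, and Suri, and the authors rely on the external proof. So there is no in-paper argument to compare against, and your proposal has to stand on its own.

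As it stands it does not: you have correctly identified the crux and then left it open. Your composition estimate (breakpoints of $g \circ f$ at most breakpoints of $f$ plus breakpoints of $g$, using monotonicity) is fine, and it cleanly bounds the \emph{bend} breakpoints within any maximal episode on which a single path is optimal in $G$ by $K$. But the count of episodes is exactly the number of \emph{switch} breakpoints of $A^G_{(s,d)}$, and your plan to bound these by $b^{G'}_{(s,d)}$ via an unspecified ``monotone-coupling argument'' is the entire content of the lemma; without it the argument is circular (you would be bounding the breakpoints of $A^G_{(s,d)}$ in terms of a quantity you can only control by already knowing how often the optimal path changes in $G$). There is also a real risk that the coupling you hope for is false for a \emph{fixed} linearization: replacing linear edge functions by piecewise-linear ones can create optimal-path alternations that the linear instance never exhibits, and the lemma is only sensible if $b^{G'}_{(s,d)}$ is read as the worst case over linear edge-function assignments on the same topology (which is how the paper uses it inside Theorem~\ref{thm:bounds}). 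The standard way to close the gap runs the product in the other direction: use FIFO monotonicity to cut the departure-time axis into at most $K$ regimes, each associated with one linear piece of one edge being ``entered,'' argue that within each regime the relevant path functions are compositions of fixed linear pieces so that the restriction of $A^G_{(s,d)}$ there is the lower envelope arising from \emph{some} linear instance on the same topology, and hence has at most $b^{G'}_{(s,d)}$ breakpoints; summing over the at most $K$ regimes gives $K \cdot b^{G'}_{(s,d)}$. Until you either prove your switch-coupling claim or restructure the decomposition along these lines, the proposal is a plausible outline with its hardest step missing.
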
	
	
	\begin{theorem}[Theorem 4.4, Foschini, Hershberger, and Suri~\cite{Foschini14}] \label{thm:n_to_the_log_n}
		For any graph with $n$ nodes and linear edge arrival time functions, the number of breakpoints is at most $n^{O(\log n)}$.  That is, $b(n) = n^{O(\log n)}$.
	\end{theorem}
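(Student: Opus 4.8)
The plan is to prove the bound with a \emph{hop-doubling} recurrence that repeatedly splits shortest paths at their midpoints, so that the quasi-polynomial bound emerges after $O(\log n)$ levels of doubling. Since the edge arrival functions are linear (affine) and FIFO, the arrival function $A_P$ of any path $P$ is a composition of monotone affine functions, hence itself monotone and affine, and $A_{(s,d)}$ is the pointwise minimum of the $A_P$ over all $s$-to-$d$ paths. First I would observe that, because the functions are FIFO (non-decreasing), for each departure time some earliest-arrival path is simple; it therefore suffices to consider paths of at most $n-1$ edges and to track, inductively, monotone piecewise-linear arrival functions at every stage.

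Next I would isolate two elementary facts about monotone piecewise-linear functions. \emph{Composition}: if $f$ and $g$ have at most $p$ and $q$ breakpoints, then $f \circ g$ has at most $p+q$ breakpoints, since every breakpoint of the composition is either a breakpoint of $g$ or a $g$-preimage of a breakpoint of $f$, and $g$ is monotone. \emph{Minimum}: the pointwise minimum of $m$ such functions, each with at most $B$ breakpoints, has at most $O(m^2(B+1))$ breakpoints; this follows because two monotone piecewise-linear functions with at most $B$ breakpoints each cross at most $2B+1$ times (count the zeros of their difference, which has at most $2B$ breakpoints), so the lower envelope contributes at most $m(m-1)/2 \cdot (2B+1)$ crossing breakpoints on top of the at most $mB$ inherited ones. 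A Davenport--Schinzel argument would sharpen this to $O(mB\,\alpha(mB))$, but the crude bound already suffices.

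With these facts, let $B(k)$ be the maximum, over all pairs $s,v$, of the number of breakpoints of the arrival function $A^{(k)}_{(s,v)}$ computed over paths using at most $k$ edges. Every path of at most $2k$ edges splits at some vertex $u$ into a prefix and a suffix of at most $k$ edges each, so $A^{(2k)}_{(s,v)} = \min_{u}(A^{(k)}_{(u,v)} \circ A^{(k)}_{(s,u)})$, the minimum being over the $n$ choices of midpoint $u$. By the composition fact each inner term has at most $2B(k)$ breakpoints, and by the minimum fact the minimum over the $n$ terms yields $B(2k) = O(n^2(B(k)+1))$. Iterating this recurrence from the base case $B(1)=0$ through $\lceil \log_2 n\rceil$ doublings solves to $B(n) \le (Cn^2)^{O(\log n)} = n^{O(\log n)}$; since every simple path has at most $n-1$ edges, $b(n) = B(n) = n^{O(\log n)}$.

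I expect the main obstacle to be the minimum step: controlling the breakpoint growth under the lower-envelope operation, and in particular verifying that each doubling level multiplies the count only by a polynomial factor, so that the totals add under composition ($p+q$, not $pq$) and grow by $\mathrm{poly}(n)$, not by a power of $B(k)$, under the minimum. This is precisely the place where the geometry of lower envelopes of piecewise-linear (Davenport--Schinzel) functions enters, and where a careless bound would turn the quasi-polynomial result into an exponential one. A secondary point requiring care is the reduction to simple paths and the validity of the midpoint split under FIFO, which is what justifies terminating the doubling recursion at $k = n$.
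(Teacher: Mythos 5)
This statement is imported verbatim from Foschini, Hershberger, and Suri (their Theorem~4.4); the paper you are reading gives no proof of it and uses it as a black box, so there is no in-paper argument to compare against. Your reconstruction is sound and follows the same hop-doubling divide-and-conquer strategy as the original FHS proof: with linear edge functions each path's arrival function is affine, the $\le 2k$-hop envelope is $\min_u\bigl(A^{(k)}_{(u,v)}\circ A^{(k)}_{(s,u)}\bigr)$, composition adds breakpoint counts while the lower envelope of $n$ monotone piecewise-linear functions multiplies them by only $\mathrm{poly}(n)$, and $O(\log n)$ doublings (justified by the FIFO shortcutting argument that reduces walks to simple paths) give $n^{O(\log n)}$. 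The two facts you flag as the crux --- the $p+q$ composition bound and the crossing-count bound for the envelope --- are both correct as stated, and your crude $O(m^2(B+1))$ envelope bound suffices; the only presentational wrinkle is that the doubling recurrence really computes arrival over \emph{walks} of at most $2^j$ hops rather than paths, which you already note and which is harmless since the two coincide once $2^j\ge n-1$.
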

	
	We use these results to prove the following stronger bound for graphs of bounded treewidth by induction over a given tree decomposition.  We use Theorem~\ref{thm:n_to_the_log_n} in the base case of the induction and use Lemma~\ref{lem:k_times_more} in the inductive step.
	
	\begin{theorem} \label{thm:bounds}
		The maximum number of breakpoints in an end-to-end arrival function for a graph $G$ of treewidth $w$ with $n$ vertices and piecewise linear edge arrival functions with at most $K$ pieces in the entire graph is at most $Kn^{O(\log^2 w)}$. That is, $b_w(n) = Kn^{O(\log^2 w)}$.
	\end{theorem}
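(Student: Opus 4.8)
The plan is to prove the bound by induction over a tree decomposition of $G$, reducing the global problem of computing $A_{(s,d)}^G$ to a sequence of \emph{local} computations, each on a graph with only $O(w)$ vertices. First I would preprocess: since we may add the two terminals $s$ and $d$ to every bag while increasing the width by at most $2$, I assume $s,d$ lie in every bag; and I would replace the given tree decomposition by a \emph{balanced} one of width $w' = O(w)$ and height $O(\log n)$, which exists for any graph of treewidth $w$. The role of balancing is crucial: it is what converts the recursion depth into $O(\log n)$ and is ultimately responsible for the $\log n$ appearing in the exponent.

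Next I would root the balanced decomposition and, for each node $t$ with bag $X_t$, define the subproblem of computing the pairwise arrival functions $A^{G[V_t]}_{(u,v)}$ for all $u,v \in X_t$, where $V_t$ is the union of the bags in the subtree at $t$. The tree-decomposition connectivity property guarantees that every vertex of $V_t$ appearing outside the subtree lies in $X_t$, so $X_t$ is a separator of size $\le w'+1$ between $V_t$ and the rest of $G$; this is exactly what keeps the boundary of each subproblem at $O(w)$ and prevents terminals from accumulating. I would solve these subproblems bottom-up. At a node $t$ with children $c_1,c_2$ I build a \emph{combination graph} $\hat H_t$ on the vertex set $X_t \cup X_{c_1} \cup X_{c_2}$ (of size $O(w)$), whose edges are the original edges inside $X_t$ together with the pairwise arrival functions on $X_{c_1}$ and $X_{c_2}$ already computed for the children. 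Using the FIFO property, any walk in $G[V_t]$ between two vertices of $X_t$ decomposes into segments each confined to one child subtree (entering and leaving only through that child's bag) or to an edge inside $X_t$; hence minimum-cost travel in $\hat H_t$ reproduces $A^{G[V_t]}_{(u,v)}$ exactly. I then obtain the pairwise functions on $X_t$ by eliminating the vertices of $(X_{c_1}\cup X_{c_2})\setminus X_t$ from $\hat H_t$ via star-mesh transformations.

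For the breakpoint bound, let $B(t)$ denote the maximum number of breakpoints among the functions computed at node $t$. Since $\hat H_t$ has $O(w)$ vertices, Theorem~\ref{thm:n_to_the_log_n} bounds the breakpoints produced by its \emph{linear skeleton} by $w^{O(\log w)}$, and Lemma~\ref{lem:k_times_more} lifts this to the piecewise-linear edges actually present: with $K' = O(w^2)\cdot(\max_i B(c_i)+1)$ total pieces among the edges of $\hat H_t$, I get $B(t) \le K'\cdot w^{O(\log w)} \le w^{O(\log w)}\cdot(\max_i B(c_i)+1)$ after absorbing polynomial-in-$w$ factors. At a leaf the edges are the original ones, so $B(\text{leaf}) \le K\cdot w^{O(\log w)}$ directly from Theorem~\ref{thm:n_to_the_log_n} and Lemma~\ref{lem:k_times_more}. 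Unrolling this recurrence along a root-to-leaf path of length $O(\log n)$ gives $B(\text{root}) \le K\,(w^{O(\log w)})^{O(\log n)} = K\,w^{O(\log w\,\log n)}$. Writing $w^{\log w} = 2^{O(\log^2 w)}$ and $2^{\log n} = n$, this equals $K\,n^{O(\log^2 w)}$, and the end-to-end function $A^G_{(s,d)}$ is among those computed at the root. This also explains the shape of the exponent: each of the $O(\log n)$ levels contributes a factor $w^{O(\log w)} = 2^{O(\log^2 w)}$.

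The main obstacle I anticipate is not the arithmetic of the recurrence but justifying its two structural inputs. First, I must verify that the combination graph is correct and genuinely of size $O(w)$: this needs the FIFO-based argument that shortest travel decomposes cleanly through bag-separators, together with a check that composing and taking minima of arrival functions (the star-mesh step) preserves the FIFO constraints, so that the semiring identities invoked remain valid. Second, I must secure a tree decomposition that is simultaneously of width $O(w)$ and height $O(\log n)$; the tension is that naively rebalancing the tree can inflate bag sizes, while a deep decomposition (height $\Theta(n)$) would yield only $w^{O(\log w\cdot n)}$, far from the target. Establishing -- or citing -- that width $O(w)$ and height $O(\log n)$ are jointly achievable, and confirming that the induced per-node separators still have size $O(w)$, is the crux on which the whole bound rests.
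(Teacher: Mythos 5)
Your proposal is correct and follows essentially the same route as the paper: both reduce the problem to a graph on an $O(w)$-vertex separator/bag, bound that small graph's end-to-end breakpoints via Theorem~\ref{thm:n_to_the_log_n} on its linear skeleton, lift to piecewise-linear edges via Lemma~\ref{lem:k_times_more}, and unroll an $O(\log n)$-depth recursion contributing a factor $w^{O(\log w)}$ per level. The only difference is presentational: the paper recurses top-down on balanced separators of size $w+1$ splitting the graph into parts of size at most $2n/3$ (so it never needs a pre-balanced decomposition), whereas you run the same recurrence bottom-up on a width-$O(w)$, height-$O(\log n)$ tree decomposition, whose existence is the standard fact you correctly flag as requiring a citation.
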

	\begin{proof}
		Consider a graph of treewidth $w$ with $n_0 \le 2w + 2$ vertices. From Theorem~\ref{thm:n_to_the_log_n} we know that 
		\begin{equation}
		b_w(n_0) = (2w + 2)^{O(\log(2w+2))} = w^{O(\log(w))}.\label{eq:basecase}
		\end{equation}
		
		It is well known that for any graph of treewidth $w$ where $n > 2w + 2$, there exists a separator $S$ of size at most $w + 1$ that divides the graph into two subsets $V_1, V_2$, each of which contains at most $\frac{2n}{3}$ vertices. Let $V' = S \cup \{s, d\}$ where $s$ and $d$ are the terminal vertices. Note that $|V'| \leq w + 3$.

		We construct a graph $G'$ on the vertex set $V'$ with one or two edges between every pair of vertices with assigned edge arrival functions derived from arrival functions in induced subgraphs of $G$.  First, consider the induced graph $G[V_1 \cup S]$, that is, vertices on one side of and including $S$. For every $u, v \in S \cup (\{s, d\} \cap V_1)$, add an edge $uv$ to $G'$ with arrival function $A^{G[V_1 \cup S]}_{(u, v)}$ (and $A^{G[V_1 \cup S]}_{(v, u)}$ for the reverse direction). Note that it is possible that $v$ is not reachable from $u$ in $G[V_1 \cup S]$; in this case, $A^{G[V_1 \cup S]}_{(u, v)} = \infty$. This edge then represents the time necessary to travel between $u$ and $v$ in $G$ only using edges on one side of $S$.  Second, add additional edges from the induced graph $G[V_2 \cup S]$ in the same way.  Let $E'$ be the resulting set of edges.  Note that for $u,v \in S$ there are parallel edges between $u$ and $v$ in $E'$, but if, for example, $s \notin S$, then edges incident to $s$ will not have parallel counterparts.  In this way, edges in $E'$ correspond to paths in $G$ between vertices in $V'$ that only contain edges on one side or other of the separator.
		
		\begin{claim}\label{clm:same}
			$A_{(s, d)}^{G'} = A_{(s, d)}^{G}$.  In particular, these functions have the same number of breakpoints.
		\end{claim}
		
		\begin{proof}[Proof of Claim~\ref{clm:same}]
			Consider an arbitrary departure time $t$.  Let $P_t$ (respectively~$P_t'$) be the shortest path to $d$ departing from $s$ at time $t$ in graph $G$ (respectively~$G'$). We argue that the time to traverse $P_t$ equals the time to traverse $P_t'$, that is, that $A_{P_t}^G(t) = A_{P_t'}^{G'}(t)$, proving the claim.
			
			The path $P_t$ corresponds to a path of equal length in $G'$. 
			This is clear because any shortest path will either not go
			through $S$, in which case an edge corresponding to it will be in
			$G'$ by construction, or will go through some vertices $v_1, \ldots,
			v_k \in S$, in which case $P_t = A_{(s, d)}^G(t) = A_{(v_k, d)}^G(t) \circ
			\ldots \circ A_{(s, v_1)}^G(t)$. All of the latter paths have edges
			corresponding to them in $G'$ by construction.
			
			The path $P'_t$ corresponds to a path of equal length in $G$. 		
			Consider the case where $P'_t$ does not go
			through $S$. Then the edge $sd \in E'$ has arrival time function
			$A_{sd}^{G'}(t) = A_{(s, d)}^G$ by construction. If $P'_t$ does go 
			through some vertices $v_1, \ldots, v_k \in S$, then
			there is some walk that is the concatenation of shortest paths
			between $s$ and $v_1$, $v_i$ and $v_{i+1}$, and $v_k$ and $d$ in
			$G$, again by the construction of $G'$. To show that this walk in
			$G$ is indeed a path, consider intermediate paths from $v_a$ to
			$v_b$ and $v_c$ to $v_d$. If these paths share any vertex $v_j$,
			then because all edges $e$ in $G$ have the property that $A_{e}(t)
			\ge t$ we could replace these paths (and all paths between them in
			the walk) with the paths from $v_a$ to $v_j$ and $v_j$ to $v_d$ to
			get a walk that is shorter than or equal to our original walk. If it is equal in length, we can let $P'_{t}$ correspond to this new walk instead, as it is the same length and visits vertex $v_j$ at least one fewer time than before. Then we can repeat this process until no vertex is visited more than once. If it is shorter, however, we arrive at a contradiction because we said that $P'_{t}$ was the shortest path between $s$ and $d$ in $G'$, and this shortcut from $v_a$ to $v_d$ would by construction of $G'$ imply that there is a shorter path in $G'$ that bypasses $v_b$ and $v_c$,
			which leads to the conclusion that there is no such shared vertex
			$v_j$.
			
			Therefore, a path $Q'_t$ of the same length as $P_t$ exists in $G'$ between $s$ and $d$ and a path $Q_t$ of the same length as $P'_t$ exists in $G$ between $s$ and $d$. Since $P_t$ and $P'_t$ are the shortest paths between $s$ and $d$ in their respective graphs, $P_t$ is no longer than $Q_t$ and $P'_t$ is no longer than $Q'_t$. Therefore, all of these paths have the same length. This completes the proof of Claim~\ref{clm:same}.
		\end{proof}

		Each edge of $E'$ represents a trip between vertices of $V'$ in $G$ that visits at most $2n/3$ vertices; therefore, each edge of $E'$ has an arrival function with at most $b_w(2n/3)$ breakpoints. Since there are $O(w^2)$ edges in $E'$, $G'$ has a total of $O(w^2 \cdot b_w(2n/3))$ breakpoints (and linear segments). If $E'$ had linear edge arrival functions, then by Equation~(\ref{eq:basecase}), there would be 
		$w^{O(\log w)}$ breakpoints in end-to-end arrival functions of $G'$. By Lemma~\ref{lem:k_times_more}, the number of breakpoints in end-to-end arrival functions of $G'$ is therefore $w^{O{(\log w)}} \cdot O(w^2) b_w(2n/3)$.  Since the arrival functions in $G$ and $G'$ are equal (Claim~\ref{clm:same}), the number of breakpoints in $G$ is described by the following recurrence:
		\[
		b_w(n) = w^{O(\log w)} b_w(2n/3) 
		\]
		
		Solving this recurrence with the base case given in Equation~(\ref{eq:basecase}), we get that $b_w(n) = n^{O(\log^2 w)}$, assuming that $G$ has linear edge arrival functions.  Invoking Lemma~\ref{lem:k_times_more} completes the proof of Theorem~\ref{thm:bounds}.
	\end{proof}
	
	\section{Time-dependent shortest paths in graphs of bounded treewidth}
	
	In this section we describe a method for reducing a graph of bounded treewidth to a single edge between two terminal vertices, using series-parallel reductions as well as star-mesh transformations, while maintaining the end-to-end arrival function between these two vertices. First, in Lemma~\ref{lem:transformations_work}, we will give a method for reassigning the edge arrival functions of a graph during each of the relevant transformations that will preserve arrival functions between the remaining vertices of the graph. Second, in Theorem~\ref{thm:reduction_algorithm} we will show that graphs of bounded treewidth and two terminals can be efficiently reduced using only these transformations, with a bound on the degree of the deleted vertex that only exceeds the treewidth of the graph by one. Finally, we will show that this result, together with Theorem~\ref{thm:bounds}, implies that end-to-end arrival functions can be efficiently computed in graphs of bounded treewidth. This is Theorem~\ref{thm:bounds_on_algorithm}, the main result of the paper.
	
	In the following discussion we will differentiate parallel edges with a subscript.  That is, if there are $k$ edges between vertices $u$ and $v$, then we will denote the edges as $(uv)_1, (uv)_2, \ldots, (uv)_k$.
	
	\subsection{Maintaining arrival functions} \label{ssec:maintaining-arrival}
	
	We start by showing that we can correctly maintain arrival functions under star-mesh transformations.  For obvious reasons, we don't allow the terminal vertices ($s$ and $d$) to be deleted in such transformation.  Self-loop deletions and pendant reductions (not involving terminals) clearly do not affect end-to-end arrival functions, given our realistic constraints on arrival functions.   For the parallel reduction, in which parallel edges $(uv)_1$ and $(uv)_2$ are replaced with a single edge $uv$, we set 
	\begin{equation}
	A_{uv}(t) = \min\{A_{(uv)_1}(t), A_{(uv)_2}(t)\} \mbox{ and }A_{vu}(t) = \min\{A_{(vu)_1}(t), A_{(vu)_2}(t)\}.  \label{eq:parallel}
	\end{equation}
	In the star-mesh transformation (and, as a special case, the series reduction), a vertex $c$, with neighbors $v_1, v_2, \ldots, v_d$, is deleted and edges $v_iv_j$ for all $i < j$ are added.  For each edge $v_i v_j$ in the resulting graph, we set 
	\begin{equation}
	A_{v_i v_j}(t) = A_{c v_j} \circ A_{v_i c}(t) \mbox{ and }A_{v_j v_i}(t) = A_{c v_i} \circ A_{v_j c}(t).\label{eq:startomesh}
	\end{equation}
	\begin{lemma} \label{lem:transformations_work}
		Parallel reductions and star-mesh transformations  (Equations~(\ref{eq:parallel}) and~(\ref{eq:startomesh})) preserve end-to-end arrival functions.
	\end{lemma}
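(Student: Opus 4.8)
The plan is to show, for each transformation producing a graph $H$ from $G$, that $A_{(s,d)}^{H} = A_{(s,d)}^{G}$ by establishing the two inequalities $A_{(s,d)}^{H}(t) \le A_{(s,d)}^{G}(t)$ and $A_{(s,d)}^{G}(t) \le A_{(s,d)}^{H}(t)$ for every departure time $t$. Since each end-to-end arrival function is by definition a pointwise minimum over $s$-to-$d$ paths, it suffices to exhibit, for a shortest path realizing the value on one side, a path or walk on the other side whose arrival time is no greater. The parallel reduction is immediate: replacing $(uv)_1,(uv)_2$ by a single edge $uv$ with $A_{uv} = \min\{A_{(uv)_1}, A_{(uv)_2}\}$ (Equation~(\ref{eq:parallel})) collapses the two parallel choices into their pointwise minimum without adding or removing any other route between $s$ and $d$, so the minimum over paths is unchanged. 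The substance of the lemma is therefore the star-mesh case, which I treat next.

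For the forward direction ($A_{(s,d)}^{H} \le A_{(s,d)}^{G}$), fix $t$ and let $P$ be a shortest $s$-to-$d$ path in $G$ at time $t$; we may take $P$ to be simple, so it visits the deleted vertex $c$ at most once. If $P$ avoids $c$, every edge of $P$ survives in $H$ and there is nothing to do. Otherwise $P$ enters $c$ along some edge $v_i c$ and leaves along some edge $c v_j$, and by the definition of a path arrival function as a composition, the arrival time of this two-edge segment is exactly $A_{c v_j} \circ A_{v_i c}$ evaluated at the time $P$ reaches $v_i$; by Equation~(\ref{eq:startomesh}) this is precisely the arrival function of the mesh edge $v_i v_j$ in $H$. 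Replacing the segment $v_i, c, v_j$ by the single edge $v_i v_j$ yields an $s$-to-$d$ path in $H$ with identical arrival time, giving $A_{(s,d)}^{H}(t) \le A_{P}^{G}(t) = A_{(s,d)}^{G}(t)$.

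For the reverse direction, let $P'$ be a shortest $s$-to-$d$ path in $H$ at time $t$. Each mesh edge $v_i v_j$ used by $P'$ unfolds, again by Equation~(\ref{eq:startomesh}), into the two-edge segment $v_i, c, v_j$ of $G$ with the same arrival time, while every non-mesh edge of $P'$ is already an edge of $G$; concatenating these gives an $s$-to-$d$ \emph{walk} $W$ in $G$ with $A_{W}^{G}(t) = A_{P'}^{H}(t)$. The main obstacle is that $W$ need not be simple: if $P'$ uses several mesh edges, $W$ may pass through $c$ more than once. This is resolved exactly as in the proof of Claim~\ref{clm:same}: because every edge satisfies $A_e(\tau) \ge \tau$ and is FIFO, any repeated visit to $c$ can be short-cut by deleting the sub-walk between the first and last visit, producing a walk of no greater arrival time, and iterating yields a simple $s$-to-$d$ path in $G$. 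Hence $A_{(s,d)}^{G}(t) \le A_{W}^{G}(t) = A_{P'}^{H}(t) = A_{(s,d)}^{H}(t)$.

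Combining the two inequalities gives $A_{(s,d)}^{H} = A_{(s,d)}^{G}$ for both transformations, which proves the lemma. The one genuinely delicate point is the reverse star-mesh direction: the unfolded object is only a walk, and turning it into a path without increasing its arrival time rests on the FIFO and $A_e(\tau) \ge \tau$ constraints (equivalently, on $\circ$ being left-distributive over $\min$), so I would state the short-cutting step carefully rather than assert it, borrowing the argument already made for Claim~\ref{clm:same}.
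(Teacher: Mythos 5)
Your proof is correct and follows essentially the same route as the paper's: a case split on whether the optimal path uses the deleted parallel edge or the star vertex $c$, with Equations~(\ref{eq:parallel}) and~(\ref{eq:startomesh}) translating path segments between the two graphs in both directions. The only difference is one of explicitness — you spell out the walk-to-path short-cutting for paths in $H$ that use several mesh edges, which the paper compresses into the final chain of inequalities involving $A_{(x,y)}$ and $A_{xy}$ — and that added care is a strength, not a divergence.
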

	\begin{proof}
		Since composition of functions is associative, any path arrival function can be written as compositions of the arrival functions of segments of that path. This means that for any fixed $t$, $A_{(s, d)}(t) = A_{(u, d)} \circ A_{(s, u)}(t)$ for any vertex $u$ on an $s$-to-$d$ path $P$ for which $A_P(t) = A_{(s, d)}(t)$.
		
		Consider the parallel reduction of $(uv)_1,(uv)_2$ to $uv$.  We argue that $A_{(u,v)}(t)$ is preserved by the assignment of Equation~(\ref{eq:parallel}) for all departure times $t$;  since  $A_{(u, v)}(t)$ is not changed for any fixed $t$, the value of $A_{(s, s')}(t)$ is also preserved.  Let $P_{(u,v)}(t)$ be the shortest $u$-to-$v$ path departing from $u$ at time $t$.  There are two cases: neither $(uv)_1$ nor $(uv)_2$ is in $P_{(u,v)}(t)$ or one of $(uv)_1$ and $(uv)_2$ (w.l.o.g., say $(uv)_1$) is in $P_{(u,v)}(t)$.  In the first case, we can safely ignore the parallel reduction because the added edge ($uv$) will not have an arrival function with value less than the removed edges. The second case, $A_{(uv)_1}(t)$ is the minimum arrival time at $v$, departing from $u$ at time $t$; by Equation~(\ref{eq:parallel}), $A_{uv}(t) = A_{(uv)_1}(t)$ as required.
		
		The case of star-mesh transformations follows a similar line of reasoning.   Let $P_{(s,d)}(t)$ be the shortest $s$-to-$d$ path departing $s$ at time $t$.  There are two cases: $c \notin P_{(s,d)}(t)$ and $c \in P_{(s,d)}(t)$.  In the first case, the star-mesh transformation does not impact $A_{(s,d)}(t)$ because the added edges will not have arrival functions with values less than alternate routes.  In the second case, let $uc$ and $cv$ be the edges incident to star vertex $c$ in $P_{(s,d)}(t)$.  Then $A_{(s,d)}(t) = A_{(v, d)}\circ A_{cv} \circ A_{uc} \circ A_{(s, u)}(t)$ by the associativity of composition.  By Equation~(\ref{eq:startomesh}), $A_{(v, d)}\circ A_{cv} \circ A_{uc} \circ A_{(s, u)}(t) = A_{(v, d)}\circ A_{uv} \circ A_{(s, u)}(t)$, as desired.   Further, any other path visiting a pair of vertices, say $x$ and $y$ incident to $c$ will have no less an arrival time; that is, $A_{(s,d)}(t) \ge A_{(y,d)} \circ A_{(x,y)} \circ A_{(s,x)}(t) \ge A_{(y,d)} \circ A_{xy} \circ A_{(s,x)}(t)$ where the second inequality follows from Equation~(\ref{eq:startomesh}).
	\end{proof}
	
	\subsection{Efficiently reducing graphs of bounded treewidth}
	
	Using these transformations we can find $A_{(s, d)}(t)$ for any graph by repeatedly applying star-mesh transformations and parallel reductions until the graph is the single edge $sd$.  At this point, by  Lemma~\ref{lem:transformations_work}, $A_{(s,d)} = A_{sd}$. Unfortunately there are two significant drawbacks to this method.
	The first problem is that a star-mesh transformation on a vertex of degree $d$ creates $\binom{d}{2}$ new edges, requiring $O(d^2)$ composition operations. In general graphs this can be as bad as $O(n^2)$ new edges. The entire reduction, which does $|V|-2$ star-mesh transformations, will take $O(n^3)$ composition operations to complete. The second problem is that the edge arrival functions can themselves gain too many linear segments, no longer supporting efficient calculations, as evidenced by Theorem~\ref{thm:n_to_the_log_n}.
	
	For graphs of bounded treewidth, the second problem is solved by the polynomial bound on the number of breakpoints in end-to-end arrival functions provided by Theorem~\ref{thm:bounds}.  We improve on the limitations suggested by the first problem by showing that graphs of bounded treewidth with 2 terminals can be reduced with a linear number of star-mesh transformations on stars of size depending only on the treewidth.  This generalizes El-Mallah and Colbourn's~\cite{ElMallah90} result that all graphs of treewidth 3 without terminals can be wye-delta reduced (i.e.\ star-mesh reduced with stars of degree at most 3).  Formally, we show:
	\begin{theorem} \label{thm:reduction_algorithm}
		A two-terminal graph $G$ with $n$ vertices and treewidth at most $w$ can be reduced using $O(w^2 n)$ parallel reductions and $O(n)$ star-mesh transformations of degree at most $w+1$.
	\end{theorem}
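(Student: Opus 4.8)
The plan is to process a width-$w$ tree decomposition of $G$ from its leaves inward, eliminating exactly one non-terminal vertex per star-mesh transformation, and to exploit the tree-decomposition structure to guarantee both that each eliminated vertex has small degree and that the new ``mesh'' edges always fall inside a single bag, so that the treewidth bound (and hence the degree bound) is preserved throughout the reduction.

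First I would fix a tree decomposition $(T,\mathcal{X})$ of $G$ of width $w$ and modify it so that a single bag contains both terminals. Letting $X_{t_s}$ and $X_{t_d}$ be bags containing $s$ and $d$ respectively, I add $s$ to every bag on the $t_s$--$t_d$ path of $T$. Each affected bag grows by at most one vertex, and $s$ now occupies a connected subtree of $T$, so this yields a valid tree decomposition of width at most $w+1$ in which $X_{t_d}$ contains both $s$ and $d$; I then root $T$ at $t_d$. This single extra vertex is exactly the source of the $w+1$ bound in the statement.

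Next comes the elimination loop. While $T$ has more than one bag, I pick any non-root leaf $t$ with parent $t'$. Standard tree-decomposition arguments show that any $v\in X_t\setminus X_{t'}$ lies in no bag other than $X_t$, so all its neighbors lie in $X_t$ and $\deg(v)\le |X_t|-1\le w+1$; moreover $v$ is never a terminal, since each terminal lies in the root $X_{t_d}$ and hence, by connectivity of its subtree, also in $X_{t'}$ whenever it lies in $X_t$. I would therefore apply the degree-$\deg(v)$ star-mesh transformation of Equation~(\ref{eq:startomesh}) to $v$ and remove $v$ from $X_t$; if instead $X_t\subseteq X_{t'}$, I delete the leaf $t$. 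Every mesh edge joins two vertices of $X_t\setminus\{v\}$, which is itself a bag of the updated decomposition, so edge-containment and the width bound $w+1$ are maintained and every later star-mesh again has degree at most $w+1$. When $T$ collapses to the single bag $X_{t_d}$, I eliminate its remaining non-terminal vertices (each of degree at most $w+1$) and finish with parallel reductions, leaving the single edge $sd$; correctness of all of this is precisely Lemma~\ref{lem:transformations_work}.

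Finally I would count. Each vertex is eliminated exactly once, giving at most $n-2=O(n)$ star-mesh transformations. A star-mesh transformation of degree at most $w+1$ creates at most $\binom{w+1}{2}=O(w^2)$ edges, each of which may be parallel to an existing edge and is removed by one parallel reduction via Equation~(\ref{eq:parallel}) (which is also what keeps the working graph sparse); summing over all transformations yields $O(w^2 n)$ parallel reductions. The step I expect to be the main obstacle is the terminal handling: without first forcing both terminals into a common root bag one can get stuck at a leaf whose only private vertex is a terminal, so the guarantee that the eliminated vertex is always a non-terminal of degree at most $w+1$ is the crux of the argument, and it is exactly what forces the $+1$ over the treewidth.
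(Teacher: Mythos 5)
Your proof is correct, and it takes a genuinely different route from the paper's. The paper works with a \emph{nice} tree decomposition of width exactly $w$ and handles the awkward case---a leaf bag whose private vertex is a terminal---via a two-step case analysis (Lemmas~\ref{lem:path} and~\ref{lem:kplusone_transform}): it first shows that if no leaf has a non-terminal private vertex then $T$ must be a path, and then walks along that path to the first bag $X_j$ with $X_j \supset X_{j+1}$, eliminating the vertex in $X_j \setminus X_{j+1}$; the $+1$ in the degree bound arises because that vertex may additionally be adjacent to the terminal $s$ living only in earlier bags. You instead preprocess the decomposition by injecting $s$ into every bag on the $t_s$--$t_d$ path, so that both terminals share the root bag; the $+1$ then comes from the width increasing to $w+1$, and afterwards the elimination loop is uniform---every leaf's private vertex is automatically non-terminal by the connectivity property, so no path case analysis and no niceness assumption are needed. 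Your version is simpler and self-contained; the paper's version keeps the working decomposition at width exactly $w$ throughout, which is a mildly stronger structural invariant but is not needed for the counts in the theorem statement (and the later running-time analysis in Theorem~\ref{thm:bounds_on_algorithm} bounds breakpoints via arrival functions in the original graph, so the intermediate width is immaterial). Both arguments yield the same bounds of $n-2 = O(n)$ star-mesh transformations of degree at most $w+1$ and $O(w^2 n)$ parallel reductions.
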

	
	To simplify the presentation of our proofs, we use {\em nice} tree decompositions.
	A \emph{nice tree decomposition} $(T, \mathcal{X})$ of a graph is a tree decomposition such that (in addition to properties we do not require for this paper) for any adjacent bags $X_{i}$ and $X_{j}$ in $\mathcal{X}$ either $X_{i} = X_{j}$, $X_{i} = X_{j} \cup \{v\}$, or $X_{j} = X_{i} \cup \{v\}$ for some vertex $v$. Tree decompositions can be made nice in linear time~\cite{Kloks94}; for a graph with $n$ vertices there is a nice tree decomposition with $O(n)$ bags.
	Given a nice tree decomposition $T$ of a treewidth $w$ graph $G$, we make the following assumptions at each step of the reduction process:
	
	\begin{enumerate}[{A}1]
		\item There are no parallel edges in $G$. If such edges exist we can simply parallel reduce them.  For every star-mesh transformation of degree $k$, at most $\binom{k}{2}$ parallel edges are introduced.  Therefore, if we perform $\ell$ star-mesh transformations of degree at most $w+1$, at most $O(w^2\ell)$ parallel reductions will be required.
		\item For all leaf bags $X_i \in T$ with parent $X_j$, $X_i \supsetneq X_j$. If this is not the case, then $X_i \subseteq X_j$ which means that we can safely remove $X_i$ from $T$ while maintaining the nice tree decomposition property of $T$.
		\item \label{assumption3} There is more than one bag in $T$. If there is only one, there are $w+1$ or fewer vertices remaining, each of which has maximum degree $w$. We can simply star-mesh transform each of the non-terminal vertices (performing parallel reductions as applicable) until only terminals remain, at which point the reduction is complete.
	\end{enumerate}
	
	Note that, due to A1, the degree of a vertex is the same as the number of vertices it is adjacent to (that is, we can ignore parallel edges). Therefore, we will refer to these values interchangeably when operating under these assumptions.
	
	At a high level, we reduce the graph by repeated elimination of leaf bags of $T$ that do not contain terminals until we are left with a path of bags, and then transform that path until we are left with a single bag that we can reduce as described in A3.
	
	\begin{lemma} \label{lem:path}
		Given a nice tree decomposition $T$ of width $w$ 
		of a graph with two terminals and where the above assumptions hold, either $T$ is a path or there is a leaf-bag $X_i$ with parent $X_j$ such that $X_i \setminus X_j$ is not a terminal.
	\end{lemma}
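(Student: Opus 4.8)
The plan is to reduce the statement to a counting argument over the leaves of $T$, using the connectivity (third) property of tree decompositions together with assumption A2. The guiding observation is that, under A2, each leaf bag ``introduces'' a private vertex that can appear in no other bag, so distinct leaves supply distinct private vertices; since there are only two terminals, three leaves force a non-terminal to be introduced somewhere.

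First I would pin down the structure of a leaf. By A2, each leaf bag $X_i$ with parent $X_j$ satisfies $X_i \supsetneq X_j$, and since $(T,\mathcal{X})$ is a nice tree decomposition, adjacent bags differ by at most one vertex. Hence $X_i = X_j \cup \{v_i\}$ for a unique vertex $v_i$, so that $X_i \setminus X_j = \{v_i\}$ is a single vertex. The goal then becomes to find a leaf for which this unique vertex $v_i$ is not a terminal.

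Next I would prove that $v_i$ is \emph{private} to $X_i$, i.e.\ it occurs in no other bag, and consequently that distinct leaves introduce distinct vertices. This is where the connectivity property is essential: the set of bags containing any fixed vertex forms a connected subtree of $T$. If $v_i$ appeared in some bag other than $X_i$, the path in $T$ joining that bag to $X_i$ would have to enter $X_i$ through its parent $X_j$; but $v_i \notin X_j$, contradicting connectivity. Therefore $v_i$ lies only in $X_i$, and in particular the vertices introduced at different leaves are pairwise distinct.

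Finally I would close with the counting step. If $T$ is a path we are already in the first case of the statement, so assume otherwise. By assumption A\ref{assumption3} there is more than one bag, and a tree that is not a path contains a bag of degree at least three, hence has at least three leaves. Each of these three leaves introduces a distinct private vertex by the previous paragraph; since the graph has only two terminals, at least one of the three introduced vertices is not a terminal, and the corresponding leaf $X_i$ witnesses the desired conclusion. The main obstacle is the privacy claim of the previous paragraph---correctly invoking connectivity to guarantee distinctness of the introduced vertices---after which the translation of ``not a path'' into ``at least three leaves'' and the pigeonhole on the two terminals are routine.
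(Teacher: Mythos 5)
Your proof is correct and follows essentially the same route as the paper's: under A2 each leaf bag introduces a single vertex appearing in no other bag, a non-path tree has at least three leaves, and the pigeonhole on two terminals yields a leaf whose introduced vertex is non-terminal. Your explicit appeal to the connectivity property to justify that the introduced vertices are private (and hence pairwise distinct) is a detail the paper states without proof, but the argument is the same.
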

	
	\begin{proof}
		Assume that $T$ is not a path. Since $T$ is a tree, there must then be three or more leaf bags. By A2, a leaf bag $X_i$ is a strict superset of its parent $X_j$. This means that $X_i \setminus X_j$ is non-empty. Specifically, it is a single vertex that only appears in $X_i$. Because there are two terminals, only two of these vertices that are exclusive to a single leaf bag can be terminals. However, since there are at least three leaf bags, at least one leaf bag must contain a vertex exclusive to that bag that is not a terminal.
	\end{proof}
	
	\begin{lemma} \label{lem:kplusone_transform}
		Given a nice tree decomposition $T$ of width $w$ 
		of a graph $G$ with two terminals and where the above assumptions hold, there is a non-terminal vertex $v$ that can be removed by way of a $(w+1)$-star-mesh transformation without increasing the treewidth of $G$.
	\end{lemma}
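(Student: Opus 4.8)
The plan is to prove Lemma~\ref{lem:kplusone_transform} by the case split that Lemma~\ref{lem:path} hands us: either $T$ is a path, or there is a leaf bag $X_i$ with parent $X_j$ whose exclusive vertex $X_i\setminus X_j$ is a non-terminal. In both cases I will exhibit a non-terminal vertex $v$ for which the union $U_v$ of all bags containing $v$ satisfies $|U_v|\le w+2$, and I claim this single property already does all the work. Indeed, every neighbour of $v$ shares a bag with $v$, so $N(v)\subseteq U_v\setminus\{v\}$ and hence $\deg(v)\le w+1$, making a $(w+1)$-star-mesh transformation applicable. Moreover, after eliminating $v$ we may contract the (connected) family of bags that contained $v$ into the single bag $U_v\setminus\{v\}$, which has at most $w+1$ vertices and contains all of the newly added mesh edges; the result is a tree decomposition of the transformed graph of width at most $w$, so the treewidth does not increase.

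For the non-path case this is immediate. Lemma~\ref{lem:path} gives a leaf $X_i$ with parent $X_j$, and by the nice-decomposition property together with A2 we have $X_i=X_j\cup\{v\}$ with $v$ a non-terminal. Since $X_i$ is a leaf, the running-intersection property forces $v$ to occur in no other bag, so $U_v=X_i$, $N(v)\subseteq X_j$, and $\deg(v)\le|X_j|\le w$. Eliminating $v$ and deleting the now-redundant leaf $X_i$ leaves a width-$w$ decomposition exactly as described above.

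The main obstacle is the path case, where both leaf-exclusive vertices may be the terminals $s$ and $d$, so neither is removable. Write the path as $X_1,\dots,X_m$ with $X_1=X_2\cup\{s\}$ and $X_m=X_{m-1}\cup\{d\}$. Scanning from the $X_1$ end, I take $v^\ast$ to be the first vertex other than $s$ that is forgotten, say at the step $X_j\to X_{j+1}$. Because no vertex is forgotten strictly between $X_2$ and $X_j$, those bags are nested, $X_2\subseteq X_3\subseteq\cdots\subseteq X_j$, so the bags containing $v^\ast$ have union $X_j$ when $v^\ast\notin X_1$, and $X_j\cup\{s\}$ when $v^\ast\in X_1$ (in which case $v^\ast\in X_2$); either way $|U_{v^\ast}|\le w+2$. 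Finally $v^\ast\neq s$ by choice and $v^\ast\neq d$ because $d$ appears only in the leaf $X_m$ while $j<m$, so $v^\ast$ is a non-terminal and the general construction applies. The sole remaining situation is that no vertex other than $s$ is ever forgotten; then $X_2\subseteq\cdots\subseteq X_m$, the whole graph lives in $X_m\cup\{s\}$ (at most $w+2$ vertices), and any vertex of $X_2$—all of which are non-terminal—has $U_v\subseteq X_m\cup\{s\}$ and is removable by the same argument, the degenerate $X_2=\emptyset$ subcase leaving $s$ isolated and the problem trivial on that side.

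The step I would take most care over is verifying that contracting the bags containing $v^\ast$ into the single bag $U_{v^\ast}\setminus\{v^\ast\}$ really yields a tree decomposition of the star-meshed graph: the new mesh edges all lie inside this bag, the contracted bags form a connected subpath so the connectivity condition of a tree decomposition is preserved, and the estimate $|U_{v^\ast}|\le w+2$ is precisely what keeps the merged bag within size $w+1$. This is also where the bound $w+1$ on the star degree—one more than the treewidth—originates, matching the statement of Theorem~\ref{thm:reduction_algorithm}.
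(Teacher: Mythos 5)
Your proposal is correct and follows essentially the same route as the paper's proof: the case split from Lemma~\ref{lem:path}, direct removal of a leaf-exclusive non-terminal, and in the path case locating the first forget step after $s$ so that the chosen vertex lies only in a nested chain of bags plus possibly $X_1$, giving degree at most $w+1$ and a merged bag of size at most $w+1$ that witnesses unchanged treewidth. Your packaging of the argument via the invariant $|U_v|\le w+2$ is a clean unification of the paper's subcases, and your explicit check that the forgotten vertex is not $d$ is a detail the paper leaves implicit, but the underlying argument is the same.
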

	\begin{proof}
		
		If there exists a leaf-bag $X_i$ in $G$ with parent $X_j$ such that $X_i \setminus X_j$ is not a terminal, let $v = X_i \setminus X_j$. Because $T$ is a valid tree decomposition, $v$ can only be adjacent to the other vertices in $X_i$, of which there are at most $w$. Therefore, $v$ can be removed by way of a $w$-star-mesh transformation. Since the elements of $X_i$ were the only vertices affected by this transformation, any added edges have both endpoints inside $X_i$, leaving the validity of the tree decomposition $T$ unaffected.
		
		If no such leaf-bag exists, by Lemma~\ref{lem:path}, $T$ is a path, which we root arbitrarily at an endpoint of the path:  label the bags in order $X_1, X_2, \ldots, X_{|T|}$ where $X_{|T|}$ is the root bag.  Due to A2, we know that $X_1 \not \subseteq X_2$. Additionally, since the tree decomposition is nice, $X_1$ and $X_2$ differ by a single vertex $x$.  If $x$ is not a terminal we would be able to remove $x$ by way of a star-mesh transformation as described above.   Therefore, we assume that $x$ is terminal $s$, without loss of generality.
		
		Let $j$ be the lowest index with $j > 1$ such that $X_j \supset X_{j+1}$. If there is no such index, let $j = |T|$, that is, let $X_j$ be the root bag of the path. Clearly, then, $X_2 \subseteq \ldots \subseteq X_j$. 
		
		If $X_j$ is not the root bag, then we can choose $v$ to be the vertex in $X_j \setminus X_{j+1}$. In this case, $v$ may be adjacent to $X_j \cup \{s\}$, as it may be present in any bag with index less than $j$, but cannot be adjacent to any other vertex, as $v \not\in X_{j+1}$ and $T$ is a valid tree decomposition. Therefore, the number of adjacencies that $v$ has is $|X_j \cup \{s\} \setminus \{v\} | \leq w + 1$.
		
		If $X_j$ is the root bag, we can choose $v$ to be any non-terminal vertex in $X_j$. Because $X_j$ has size at most $w + 1$ and includes every vertex in the graph besides $s$, there are at most $w + 2$ vertices in the graph, out of which $w + 1$ are not $v$. Therefore, $v$ can only be adjacent to at most $w + 1$ other vertices.
		
		The chosen vertex $v$ is clearly degree $w + 1$, which implies that it can be removed using a $(w + 1)$-star-mesh transformation. It remains to show that the resulting graph still has treewidth $w$.
		
		If $X_j$ was the root bag before the deletion, we have shown that the graph had at most $w + 2$ vertices. With one of those vertices deleted, there are now $w + 1$ vertices. Since these will all fit into a single bag in a tree decomposition of width $w$, clearly the graph still has treewidth $w$. 	
		If $X_j$ was not the root bag, then a similar argument applies. The subgraph $X_j \cup \{s\}$, which is the portion of $G$ affected by the star-mesh transformation, had at most $w + 2$ vertices. With one of these vertices deleted, there are now $w + 1$ vertices. In addition, with the removal of $v$ there is now only one vertex, $s$, that this subgraph does not share with $X_{j+1}$. Therefore, we can combine this entire subgraph into a single bag with parent $X_{j+1}$, maintaining a nice tree decomposition with width $w$.
	\end{proof}
	
	Theorem~\ref{thm:reduction_algorithm} follows from Lemma~\ref{lem:kplusone_transform} and the assumptions. We perform $(w + 1)$-star-mesh transformations on the graph as described in Lemma~\ref{lem:kplusone_transform} until the graph has only two terminals remaining. There will be exactly $n-2$ of these transformations, as we can remove every vertex except for the terminals. Between these transformations, we reduce every set of parallel edges in $G$ as described by A1. Because there are $n-2$ vertices that are star-mesh transformed, each of which has degree at most $w + 1$, the number of parallel reductions will be at most $(n - 2){\binom{w + 1}{2}}$, which is $O(w^2n)$.
	
	A simple algorithm for solving time-dependent shortest paths in bounded treewidth graphs immediately follows. Simply reduce the graph as described above, maintaining the end-to-end arrival function between the terminals as in Lemma~\ref{lem:transformations_work}. Then the final edge, with its endpoints as the two terminals, will have the desired arrival function.
	
	\begin{theorem} \label{thm:bounds_on_algorithm}
		End-to-end arrival functions in a graph $G$ with treewidth $w$ can be computed in $n^{O(\log^2 w)}$ time.
	\end{theorem}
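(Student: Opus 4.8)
The plan is to analyze the algorithm already described immediately after Theorem~\ref{thm:reduction_algorithm}: reduce $G$ to the single edge $sd$ by the $O(n)$ star-mesh transformations of degree at most $w+1$ and $O(w^2 n)$ parallel reductions guaranteed by that theorem, updating the arrival functions of the affected edges via Equations~(\ref{eq:parallel}) and~(\ref{eq:startomesh}) at each step. Correctness is immediate from Lemma~\ref{lem:transformations_work}, since every transformation preserves end-to-end arrival functions and hence the function stored on the final edge $sd$ equals $A_{(s,d)}^G$. The only thing left is to bound the running time, which is dominated by the cost of the $\min$ and composition operations performed on the piecewise-linear arrival functions; since each such operation runs in time linear in the number of breakpoints of its operands, the heart of the proof is bounding this breakpoint count throughout the reduction.

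First I would record the invariant that at every stage of the reduction the function stored on any edge $uv$ equals the end-to-end arrival function $A_{(u,v)}^H$ of the subgraph $H \subseteq G$ consisting of exactly those original edges that have been absorbed into $uv$ by the transformations applied so far. This holds initially (each edge is its own one-edge subgraph) and is maintained by Equations~(\ref{eq:parallel}) and~(\ref{eq:startomesh}) for precisely the reason given in the proof of Lemma~\ref{lem:transformations_work}. Since every such $H$ is a subgraph of $G$, it has at most $n$ vertices and treewidth at most $w$, so Theorem~\ref{thm:bounds} applies and bounds the number of breakpoints on every stored edge by $b_w(n) = Kn^{O(\log^2 w)}$ for the duration of the computation.

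Next I would bound the cost of a single operation in terms of $B := b_w(n)$. A $\min$ of two piecewise-linear FIFO functions with at most $B$ breakpoints each is computed by a linear sweep in $O(B)$ time. A composition $f \circ g$ of two such functions introduces at most one new breakpoint for each breakpoint of $f$ and each breakpoint of $g$ (using monotonicity of $g$), so it has $O(B)$ breakpoints and is likewise computed in $O(B)$ time; the transient factor-two blow-up is removed when the result is combined, via Equation~(\ref{eq:parallel}), with any parallel edge, restoring an edge function of at most $B$ breakpoints by the invariant. By Theorem~\ref{thm:reduction_algorithm} the reduction performs $O(w^2 n)$ operations in total, each costing $O(B)$, for a total running time of $O(w^2 n \cdot b_w(n)) = O(w^2 n K \cdot n^{O(\log^2 w)})$. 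Absorbing the polynomial factors $w^2 n$ and (polynomial) $K$ into the exponent gives the claimed bound of $n^{O(\log^2 w)}$.

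The main obstacle is the invariant of the second paragraph: one must argue carefully that every stored edge function is genuinely an end-to-end arrival function of a treewidth-$w$ subgraph, so that the uniform per-edge breakpoint bound of Theorem~\ref{thm:bounds} applies at every step rather than only to the final $s$-to-$d$ function. This is what prevents the breakpoint counts --- and hence the per-operation costs --- from compounding across the $O(w^2 n)$ operations; once it is established, the time bound follows from the elementary counting above.
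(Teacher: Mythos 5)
Your proposal is correct and follows essentially the same route as the paper: run the reduction of Theorem~\ref{thm:reduction_algorithm}, use Lemma~\ref{lem:transformations_work} for correctness, and charge each $\min$/composition operation a cost linear in the breakpoint bound supplied by Theorem~\ref{thm:bounds}. Your explicit invariant that every intermediate edge function is itself an end-to-end arrival function of a treewidth-$\le w$ subgraph of $G$ is in fact a more careful justification of a step the paper merely asserts (``every edge at any stage of the reduction is similarly bounded'').
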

	\begin{proof}
		By Theorem~\ref{thm:reduction_algorithm}, we can reduce $G$ using $O(w^2 n)$ parallel reductions and $O(n)$ star-mesh transformations of maximum degree $w+1$. 
		
		For each parallel reduction we compute the minimum of two piecewise linear functions, which can be done in time linear in the number of breakpoints of the functions. One can simply iterate through the linear pieces of each function, noting when the functions intersect.
		
		Similarly, for each $k$-star-mesh transformation we compute $2{\binom{k}{2}} \in O(k^2)$ compositions of piecewise linear functions, one for each direction of each new edge, each of which can also be done time linear in the number of breakpoints of the functions. To compute the composition of functions $g \circ f$, one computes the image in $g$ of breakpoints of $f$, which is guaranteed to be sorted because the functions are monotone, and merges the image with a list of breakpoints of $g$. Then, for each interval in the merged list of breakpoints one calculates the value of the composition of the two relevant segments of the original functions using simple algebra. Since $k \leq w+1$, the number of compositions of performed is $O(w^2)$. 
		
		By Theorem~\ref{thm:bounds}, we know that each end-to-end arrival function in $G$ has $n^{O(\log^2 w)}$ breakpoints, which means that every edge at any stage of the reduction is similarly bounded. Therefore, the process will take $O(w^2 n)\cdot n^{O(\log^2 w)}$ time for the parallel reductions and $O(n) \cdot O(w^2) \cdot n^{O(\log^2 w)}$ time for the star-mesh transformations, for a total of $w^2 n^{O(\log^2 w)} = n^{O(\log^2 w)}$ time.
	\end{proof}
	
	\section{Future work}
	
	In this paper we showed that extending the wye-delta-wye transformations by adding star-mesh transformations of bounded degree greater than three allows for the efficient reduction of graphs of bounded treewidth. A natural question is if other classes of graphs can be efficiently reduced with similar extensions. In our algorithm for Theorem~\ref{thm:reduction_algorithm}, we never use the $\Delta$-Y transformation. Does using higher-degree star-mesh transformations in conjunction with the $\Delta$-Y transformation yield a reduction algorithm for an interesting set of graphs?
	
	The bound given by Foschini, Hershberger, and Suri~\cite{Foschini14} for general graphs is tight: that is, there exists a graph for which an end-to-end arrival function has $n^{\Theta(\log n)}$ breakpoints, and there are no graphs where any end-to-end arrival function is asymptotically worse than this. Their lower bound proof method extends to graphs of bounded treewidth. They construct a layered graph for which the layers have a size dependent on $n$. These layers have the property of being valid bags for a tree decomposition of the graph, so we can instead restrict the layers to have maximum size $w + 1$ for some constant width $w$ to get a bound of $n^{\Omega(\log w)}$.  However, 
	Theorem~\ref{thm:bounds} gives an upper bound of $n^{O(\log^2 w)}$. It remains open what a tight bound would be for bounded-treewidth graphs.

\pagebreak

\bibliography{bibliography}

\begin{thebibliography}{10}

\bibitem{Archdeacon00}
Dan Archdeacon, Charles~J. Colbourn, Isidoro Gitler, and J.~Scott Provan.
\newblock Four-terminal reducibility and projective planar
  wye-delta-wye-reducible graphs.
\newblock {\em Journal of Graph Theory}, 33:83--93, 2000.

\bibitem{brandstadt1999graph}
Andreas Brandst{\"a}dt, Van~Bang Le, and Jeremy~P Spinrad.
\newblock {\em Graph classes: a survey}, volume~3.
\newblock Siam, 1999.

\bibitem{chang2015electrical}
Hsien-Chih Chang and Jeff Erickson.
\newblock Electrical reduction, homotopy moves, and defect.
\newblock {\em arXiv preprint arXiv:1510.00571}, 2015.

\bibitem{chari1996delta}
Manoj~K Chari, Thomas~A Feo, and J~Scott Provan.
\newblock The delta-wye approximation procedure for two-terminal reliability.
\newblock {\em Operations Research}, 44(5):745--757, 1996.

\bibitem{cooke1966shortest}
Kenneth~L Cooke and Eric Halsey.
\newblock The shortest route through a network with time-dependent internodal
  transit times.
\newblock {\em Journal of mathematical analysis and applications},
  14(3):493--498, 1966.

\bibitem{dean2004shortest}
Brian~C Dean.
\newblock Shortest paths in {FIFO} time-dependent networks: Theory and
  algorithms.
\newblock {\em Rapport technique, Massachusetts Institute of Technology}, 2004.

\bibitem{ding2008finding}
Bolin Ding, Jeffrey~Xu Yu, and Lu~Qin.
\newblock Finding time-dependent shortest paths over large graphs.
\newblock In {\em Proceedings of the 11th international conference on Extending
  database technology: Advances in database technology}, pages 205--216. ACM,
  2008.

\bibitem{dreyfus1969appraisal}
Stuart~E Dreyfus.
\newblock An appraisal of some shortest-path algorithms.
\newblock {\em Operations research}, 17(3):395--412, 1969.

\bibitem{ElMallah90}
Ehab~S. El-Mallah and Charles~J. Colbourn.
\newblock On two dual classes of planar graphs.
\newblock {\em Discrete mathematics}, 80(1):21--40, 1990.

\bibitem{Epifanov66}
G.~V. Epifanov.
\newblock Reduction of a plane graph to an edge by a star-triangle
  transformation.
\newblock {\em Doklady}, 166(1):13--17, 1966.

\bibitem{Feo93}
Thomas~A. Feo and J.~Scott Provan.
\newblock Delta-wye transformations and the efficient reduction of two-terminal
  planar graphs.
\newblock {\em Operations Research}, 41(3):572--582, 1993.

\bibitem{fernandez1992parametric}
David Fern{\'a}ndez-Baca and Giora Slutzki.
\newblock Parametric problems on graphs of bounded tree-width.
\newblock In {\em Algorithm Theory—SWAT'92}, pages 304--316. Springer, 1992.

\bibitem{Foschini14}
Luca Foschini, John Hershberger, and Subhash Suri.
\newblock On the complexity of time-dependent shortest paths.
\newblock {\em Algorithmica}, 68(4):1075--1097, 2014.

\bibitem{Gitler11}
Isidoro Gitler and Feli\`{u} Sagols.
\newblock On terminal delta-wye reducibility of planar graphs.
\newblock {\em Networks}, 57(2):174--186, 2011.

\bibitem{kennelly1899equivalence}
Arthur~E Kennelly.
\newblock The equivalence of triangles and three-pointed stars in conducting
  networks.
\newblock {\em Electrical world and engineer}, 34(12):413--414, 1899.

\bibitem{Kloks94}
Ton Kloks.
\newblock {\em Treewidth: computations and approximations}, volume 842.
\newblock Springer, 1994.

\bibitem{orda1990shortest}
Ariel Orda and Raphael Rom.
\newblock Shortest-path and minimum-delay algorithms in networks with
  time-dependent edge-length.
\newblock {\em Journal of the ACM (JACM)}, 37(3):607--625, 1990.

\bibitem{robertson1995sachs}
Neil Robertson, Paul Seymour, and Robin Thomas.
\newblock Sachs' linkless embedding conjecture.
\newblock {\em Journal of Combinatorial Theory, Series B}, 64(2):185--227,
  1995.

\bibitem{robertson1984graph}
Neil Robertson and Paul~D Seymour.
\newblock Graph minors. iii. planar tree-width.
\newblock {\em Journal of Combinatorial Theory, Series B}, 36(1):49--64, 1984.

\bibitem{robertson1995graph}
Neil Robertson and Paul~D Seymour.
\newblock Graph minors. {XIII}. {The} disjoint paths problem.
\newblock {\em Journal of combinatorial theory, Series B}, 63(1):65--110, 1995.

\bibitem{robertson2004graph}
Neil Robertson and Paul~D Seymour.
\newblock Graph minors. {XX}. {Wagner}'s conjecture.
\newblock {\em Journal of Combinatorial Theory, Series B}, 92(2):325--357,
  2004.

\bibitem{shannon1938symbolic}
Claude~Elwood Shannon.
\newblock A symbolic analysis of relay and switching circuits.
\newblock {\em American Institute of Electrical Engineers, Transactions of
  the}, 57(12):713--723, 1938.

\bibitem{truemper1989delta}
Klaus Truemper.
\newblock On the delta-wye reduction for planar graphs.
\newblock {\em Journal of graph theory}, 13(2):141--148, 1989.

\bibitem{Yu06}
Yaming Yu.
\newblock More forbidden minors for wye-delta-wye reducibility.
\newblock {\em The Electronic Journal of Combinatorics}, 13(R7), 2006.

\end{thebibliography}

\end{document}